\documentclass[twocolumn,amsmath,amssymb,nofootinbib,prd,aps,10pt,floatfix]{revtex4-2}
\usepackage[utf8]{inputenc}
\usepackage{graphicx, float,amsmath, amsmath, amssymb}
\usepackage{amsfonts}
\usepackage{bm}
\usepackage{dcolumn}
\usepackage[braket]{qcircuit}
\usepackage[export]{adjustbox}
\usepackage{multirow}
\usepackage{color}
\usepackage{caption}
\usepackage{subcaption}
\usepackage{tensor}
\usepackage{listings}
\usepackage{xcolor}
\usepackage{amsthm}
\usepackage[format=plain, justification=raggedright, singlelinecheck=false]{caption}

\newtheorem{theorem}{Theorem}
\newtheorem{corollary}{Corollary}[theorem]

\newcommand{\braket}[2]{\left\langle #1\,\middle|\,#2\right\rangle}

\newcommand{\Pval}{P_{\mathrm{val}}}

\definecolor{codegreen}{rgb}{0,0.6,0}
\definecolor{codegray}{rgb}{0.5,0.5,0.5}
\definecolor{codepurple}{rgb}{0.58,0,0.82}
\definecolor{backcolour}{rgb}{0.95,0.95,0.92}

\lstdefinestyle{mystyle}{
    backgroundcolor=\color{backcolour},   
    commentstyle=\color{codegreen},
    keywordstyle=\color{magenta},
    numberstyle=\tiny\color{codegray},
    stringstyle=\color{codepurple},
    basicstyle=\ttfamily\footnotesize,
    breakatwhitespace=false,         
    breaklines=true,                 
    captionpos=b,                    
    keepspaces=true,                 
    numbers=left,                    
    numbersep=5pt,                  
    showspaces=false,                
    showstringspaces=false,
    showtabs=false,                  
    tabsize=2
}

\usepackage{hyperref}

\begin{document}

\title{An Information-Theoretic Characterization of Optimal Value-Readout in Response-Register Quantum Oracles}

\author{Milad Ghadimi$^{1}$}
\email{milad.ghadimi@tu-dresden.de}

\author{Hesam Soltanpanahi$^{2}$}
\email{hesam.soltan@gmail.com}

\author{Vahid Salari$^{3}$}
\email{vahid.salari1@ucalgary.ca}

\affiliation{
$^{1}$Deutsche Telekom Chair of Communication Networks, Technische Universit\"at Dresden, 01062 Dresden, Germany\\
$^{2}$Institute of Theoretical Physics and Mark Kac Center for Complex Systems Research, Jagiellonian University, 
{\L}ojasiewicza 11, 30-348 Cracow, Poland\\
$^{3}$Institute for Quantum Science and Technology, Department of Physics and Astronomy, University of Calgary, Calgary, AB, Canada}
\date{\today}

\begin{abstract}
Response-register quantum oracles admit two complementary operational interpretations: value readout and phase kickback. Although their computational equivalence is well understood, the quantitative relation between these two operational interpretations has lacked an exact characterization. We prove that, for finite Abelian response groups, the optimal single-query value-readout probability is exactly the normalized R\'{e}nyi-$\frac12$ effective Fourier support of the response state. This identity provides an exact information-theoretic characterization of optimal value-readout capability, gives the R\'{e}nyi-$\frac12$ effective Fourier support a direct operational interpretation in the oracle setting, and yields a tight phase--value complementarity theorem together with an explicit family of saturating response states.

\begin{center}

\itshape

In memory of our dear friend, Farhad Fazileh.

\end{center}

\end{abstract}

\maketitle

% -----------------------------------
\section{Introduction}
% -----------------------------------

Quantum algorithms derive much of their power from the way information is accessed through oracles. A common oracle model augments the computational register with a response register and implements a translation conditioned on the function value. Such response-register oracles appear throughout quantum computing, from the Deutsch–Jozsa \cite{deutsch1992rapid} and Bernstein–Vazirani \cite{bernstein1993quantum} algorithms to Grover search \cite{grover1996fast}, amplitude amplification \cite{brassard2000quantum}, and more general oracle-based learning and estimation tasks.

A remarkable feature of response-register oracles is that the same oracle unitary admits two fundamentally different interpretations. In the value-oracle interpretation, the response register stores information about the function value and can be measured to extract it. In the phase--oracle interpretation, the response register is initialized in a character state, causing the oracle action to appear as a phase kickback on the computational register \cite{Cleve:1997dh, nielsen2010quantum, Kitaev:1995qy, ghadimi2026generalized}. These two operational interpretations underlie many of the most important quantum algorithmic primitives.

\begin{figure*}
\centering
\includegraphics[width=0.85\linewidth]{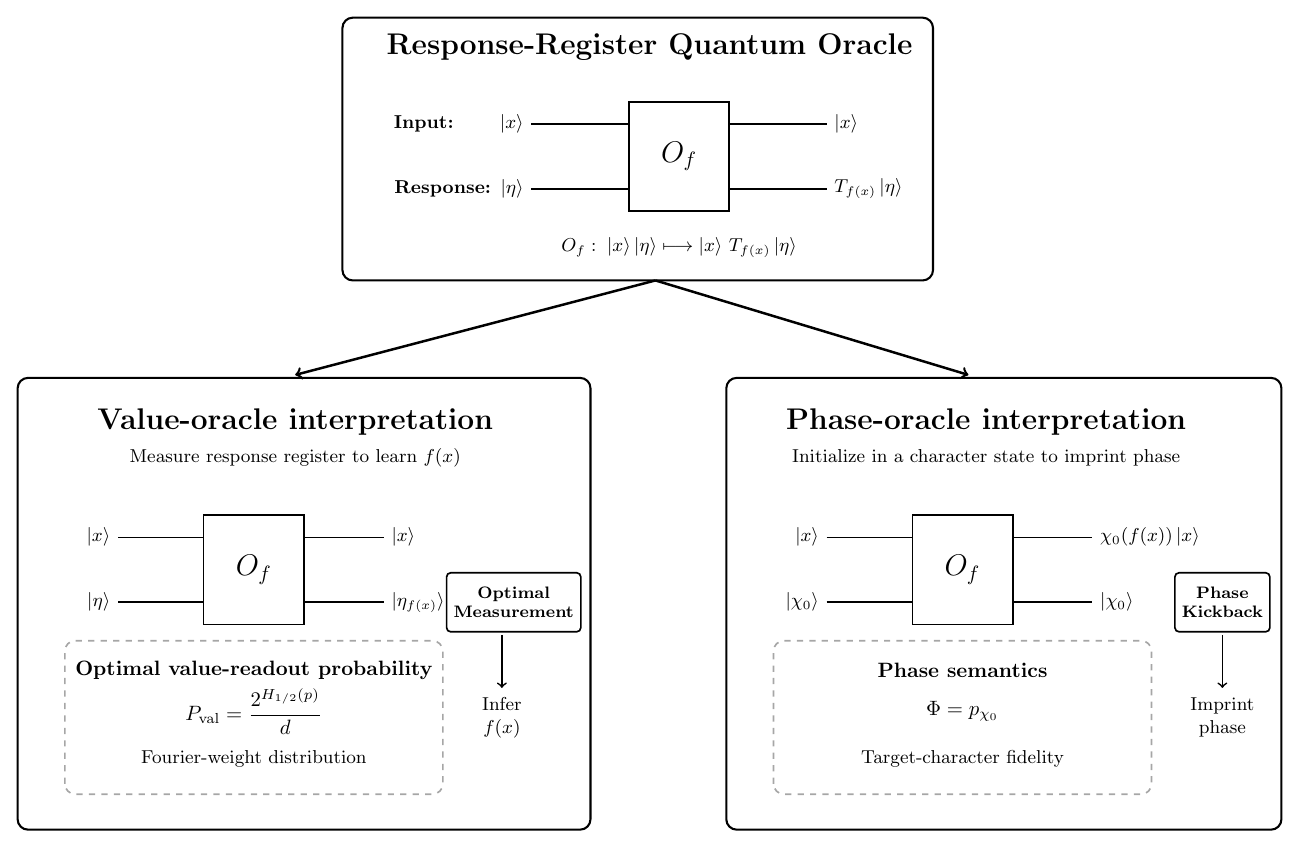}
\caption{Response-register oracle and its two operational interpretations.
\textbf{Top:} A response-register oracle $O_f$ acts on a computational register $\ket{x}$ and a response state $\ket{\eta}$ by applying the translation $T_{f(x)}$ determined by the oracle value f(x).
\textbf{Bottom left:} In the value-oracle interpretation, the translated response states are measured to infer the oracle value. The optimal single-query value-readout probability is completely determined by the Fourier-weight distribution of the response state.
\textbf{Bottom Right:} In the phase--oracle interpretation, the response register is initialized in a character state $\ket{\chi_0}$, so that the oracle translation contributes only through the phase factor $\chi_0\left(
f\left(x\right)\right)$ via phase kickback.
The figure illustrates that value readout and phase kickback arise from the same oracle unitary but exploit complementary aspects of the Fourier structure of the response state.}
\label{fig:Oracles}
\end{figure*}

The relationship between value and phase oracles has been studied extensively, and their computational equivalence is well understood under a variety of settings \cite{Hoyer:2005luj, BonehLipton1995, Cleve:1997dh}.
However, a basic quantitative question appears to have remained unanswered. To what extent can a single response state simultaneously support both value readout and phase kickback? Intuitively, the two tasks pull in opposite directions. Accurate phase semantics require concentration onto a distinguished Fourier character, whereas accurate value readout requires the translated orbit of the response state to be highly distinguishable. While this tension is evident at an intuitive level, a sharp quantitative characterization has been lacking.

We answer this question by identifying the information-theoretic quantity that completely determines optimal value-readout capability in response-register oracles. Specifically, we prove that, for finite Abelian response groups, the optimal single-query value-readout probability is exactly the normalized R\'{e}nyi-$\frac12$ effective Fourier support of the response state.

The present work makes three principal contributions. First, it provides an exact information-theoretic characterization of optimal value-readout capability in response-register oracles; second, it gives the Rényi-$\frac12$ effective Fourier support a direct operational interpretation in the oracle setting; third, it establishes a tight phase--value complementarity theorem together with an explicit family of response states that saturates the achievable tradeoff.

The remainder of the paper is organized as follows. Section~\ref{sec:sec4} reviews the relevant background and positions the present work within the existing literature. Section~\ref{sec:sec2} presents the main theorem together with its phase--value complementarity corollary. Section~\ref{sec:sec3} develops the Fourier-geometric framework underlying the theorem. Finally, Section~\ref{sec:sec5} discusses the conceptual implications of the results and outlines directions for future work.

% -----------------------------------
\section{Related Work}\label{sec:sec4}
% -----------------------------------
This work draws upon several well-established directions in quantum information theory, including response-register oracle models, covariant discrimination of symmetric quantum states, quantitative complementarity relations, and operational interpretations of coherence and asymmetry. This section reviews the aspects of these research directions that are most directly relevant to the present work and places our contribution in the context of the existing literature.

% -----------------------------------
\subsection{Phase Kickback and Oracle Models}
% -----------------------------------
Phase kickback is one of the central mechanisms underlying quantum algorithmic speedups. By initializing the auxiliary register in a suitable character state, a response-register oracle acts as a phase oracle on the computational register. This principle underlies the Deutsch--Jozsa algorithm \cite{deutsch1992rapid}, the Bernstein--Vazirani algorithm \cite{bernstein1993quantum}, Simon's algorithm \cite{simon1997power}, quantum phase estimation \cite{Cleve:1997dh}, and numerous related constructions \cite{nielsen2010quantum}. The relationship between value and phase oracles has been extensively studied, and several works have established their computational equivalence.

The present work addresses a different question. Rather than asking whether one oracle model can simulate another \cite{Hoyer:2005luj, Cleve:1997dh, BonehLipton1995}, we ask how effectively a single response state supports both value-readout semantics and phase semantics. We quantify the tradeoff between value-readout probability and phase fidelity for a fixed response state (see Fig.~\ref{fig:Oracles}).

% -----------------------------------
\subsection{Covariant State Discrimination}
% -----------------------------------

The present analysis also draws upon the theory of quantum state discrimination. The orbit generated by translating a response state under the translation action of an Abelian response group forms a geometrically uniform ensemble. The optimal discrimination of such ensembles has been studied extensively in the literature on symmetric quantum states \cite{Eldar:2000dav}, covariant measurements \cite{holevo1973statistical}, minimum-error discrimination \cite{Yuen:1975ptn, ban1997optimum}, square-root measurement (SRM) \cite{Hausladen01121994}, and more recent treatments of geometrically uniform quantum states \cite{Krovi:2015bah, Zhou:2025iac}.

The optimality of the square-root measurement for geometrically uniform ensembles generated by Abelian groups is well established in the literature \cite{Eldar:2000dav, ban1997optimum, Hausladen01121994}. The present work does not provide a new optimality theorem. Rather, it asks what determines optimal value-readout capability in the specific setting of response-register oracles. In particular, the response-state discrimination problem admits a natural Fourier description in which the Fourier-weight distribution completely determines the optimal value-readout probability. This yields an exact information-theoretic characterization of value-readout capability in terms of the R\'{e}nyi-$\frac12$ effective Fourier support.

Beyond geometrically uniform state discrimination, several works have connected quantum coherence to operational discrimination tasks from the perspective of resource theories. In particular, robustness of coherence has been given an operational interpretation through phase--discrimination tasks \cite{Napoli:2016vpc,Piani:2016lse}. More recently, related connections between coherence, asymmetry, and quantum discrimination have been explored in broader operational settings \cite{Copeland:2021bvd}. In contrast, our work does not quantify an operational advantage relative to incoherent states. Instead, we derive an exact expression for the optimal value-readout probability for translation-covariant orbit ensembles, providing a distinct operational interpretation of the underlying discrimination quantity.

Recent work has further clarified the role of optimal discrimination for symmetric and geometrically uniform state ensembles. In particular, Bagan et al. \cite{Bagan:2018uuk} established generalized complementarity relations for finite-group actions, in which one component is the optimal discrimination probability of group transformations. In the Abelian multiplicity-free setting, their success-probability expression reduces to the same mathematical quantity appearing in Eq.~\eqref{eq:Pval0}. Our work adopts a different operational viewpoint. Rather than studying complementarity arising from asymmetry under group actions, we consider response-register oracles and interpret the same quantity as the optimal oracle-value readout probability. 

% -----------------------------------
\subsection{Complementarity Relations}
% -----------------------------------
Quantitative complementarity relations express fundamental tradeoffs between incompatible operational capabilities of quantum systems. A prominent example is Englert's wave--particle duality relation \cite{Englert:1996zz}, which quantifies the tradeoff between interference visibility and path distinguishability. More generally, related tradeoff relations have been developed in a variety of settings, including finite-group actions, where one component is the optimal discrimination probability of group transformations \cite{Jaeger:1995sv, Bagan:2018uuk}.

The complementarity established here follows the same operational philosophy but concerns a different pair of competing tasks. Rather than relating distinct operational tasks arising in different physical settings, it quantifies the tradeoff between two operational interpretations of the same response-register oracle.

% -----------------------------------
\subsection{Contribution of This Work}
% -----------------------------------

The principal scientific contribution of this work is an exact information-theoretic characterization of optimal value-readout capability in response-register oracles, showing that it is completely determined by the normalized R\'{e}nyi-$\frac12$  effective Fourier support of the response state.

% -----------------------------------
\section{Theorem}\label{sec:sec2}
% -----------------------------------

\begin{theorem}[\textbf{Entropy Form of Abelian Phase–Value Complementarity}]\label{th:theorem1}
Let $A$ be a finite Abelian group with $|A|=d>1$, $\widehat{A}$ be its character group, and the response state be
\begin{equation}
\ket{\eta} 
=
\sum_{\chi \in \widehat{A}}
\sqrt{p_\chi}e^{i\theta_\chi} \ket{\chi},
\end{equation}
with a Fourier-weight distribution
\begin{equation}\label{eq:Fourier-weight distribution}
p=(p_\chi)_{\chi \in \widehat{A}}.
\end{equation}
Then the optimal value-readout probability for the orbit ensembles
\begin{equation}
\mathcal{E}_\eta = \left\{\frac1d, T_a \ket{\eta}\right\}_{a\in A},
\end{equation}
is 
\begin{equation}\label{eq:Pval0}
\Pval= \frac{2^{H_{1/2}(p)}}{d}=\frac{1}{d} \left(\sum_{\chi \in \widehat{A}}
\sqrt{p_\chi} \right)^2.
\end{equation}
\end{theorem}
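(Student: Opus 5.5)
The plan is to use the character basis to diagonalize the translations and then prove matching upper and lower bounds. Work in the Fourier basis $\{\ket{\chi}\}_{\chi\in\widehat{A}}$, in which every translation is diagonal: $T_a\ket{\chi}=\chi(a)\ket{\chi}$, with the sign convention fixed so that character states are eigenvectors of the translations, as used in the phase-kickback picture. Then
\begin{equation}
T_a\ket{\eta}=\sum_{\chi}\sqrt{p_\chi}\,e^{i\theta_\chi}\chi(a)\ket{\chi}.
\end{equation}
First, the phases $e^{i\theta_\chi}$ can be removed by the diagonal unitary $D=\sum_\chi e^{i\theta_\chi}\ket{\chi}\bra{\chi}$, which commutes with every $T_a$. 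So without loss of generality $\theta_\chi=0$, and the discrimination problem does not depend on the phases. Next, restrict to the support $S=\{\chi:p_\chi>0\}$. The orbit spans a subspace of dimension at most $|S|$, and a measurement on the full space can be compressed to that span without changing any success probability.

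\textbf{Achievability.} I will use the square-root measurement, or equivalently the explicit covariant POVM $E_a=\frac1d\, T_a\ket{\phi}\bra{\phi}T_a^\dagger$ with $\ket{\phi}=\sum_\chi\ket{\chi}$ (the uniform Fourier vector). Completeness follows from character orthogonality:
\begin{equation}
\frac1d\sum_a\chi(a)\overline{\chi'(a)}=\delta_{\chi\chi'},
\end{equation}
which gives $\sum_a E_a=I$. The success probability is then
\begin{equation}
\frac1d\sum_a \frac1d|\langle\phi|\eta\rangle|^2=\frac1d\Big(\sum_\chi\sqrt{p_\chi}\Big)^2,
\end{equation}
which is the claimed value.

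\textbf{Optimality (upper bound).} Take any POVM $\{M_a\}$. Its success probability is $\frac1d\sum_a\langle\eta|T_a^\dagger M_aT_a|\eta\rangle$. Averaging over the group, i.e. replacing $M_a$ by $\frac1d\sum_b T_bM_{a-b}T_b^\dagger$, shows that one may assume the measurement is covariant, $M_a=T_aM_0T_a^\dagger$, with seed $M_0\ge 0$. Completeness then becomes $\sum_a T_aM_0T_a^\dagger=I$. Twirling over $A$ keeps only the Fourier-diagonal part, so this condition is exactly $d\,(M_0)_{\chi\chi}=1$ for every $\chi$. The success probability is $\langle\eta|M_0|\eta\rangle=\sum_{\chi,\chi'}\sqrt{p_\chi p_{\chi'}}(M_0)_{\chi\chi'}$. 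Since $M_0\ge0$, we have $|(M_0)_{\chi\chi'}|\le\sqrt{(M_0)_{\chi\chi}(M_0)_{\chi'\chi'}}=1/d$. This gives $P\le\frac1d(\sum_\chi\sqrt{p_\chi})^2$, with equality at $M_0=\frac1d\ket{\phi}\bra{\phi}$. Finally, I will rewrite the bound as $2^{H_{1/2}(p)}$ using $H_{1/2}(p)=2\log_2\sum_\chi\sqrt{p_\chi}$.

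\textbf{Main obstacle.} The step needing most care is the covariance reduction together with the identification of the completeness constraint as a constraint on diagonal entries only. The identification relies on the characters being distinct, which makes the representation multiplicity-free. Because $|\widehat{A}|=d$ equals the Hilbert-space dimension, each character appears exactly once, so the twirl kills all off-diagonal entries. I would check this carefully, and also check that the convention for $T_a$, acting as a regular-representation shift on $\ket{a}$, indeed makes $\{\ket{\chi}\}$ its eigenbasis.
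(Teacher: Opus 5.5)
Your proposal is correct and takes essentially the same route as the paper. Both restrict to a covariant POVM with seed $M_0$, use completeness to fix its Fourier-diagonal entries to $1/d$, bound the off-diagonal entries by $1/d$ via positivity, and attain the bound with the rank-one seed $\frac1d|\phi\rangle\langle\phi|$. The differences are minor improvements: you strip the phases with the commuting diagonal unitary $D$ instead of matching them in the paper's seed vector $|\mu\rangle$, and you justify the covariance reduction by the twirl $M_a\mapsto\frac1d\sum_b T_bM_{a-b}T_b^\dagger$, which the paper only asserts.
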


Theorem~\ref{th:theorem1} provides an exact information-theoretic characterization of the optimal value-readout probability by identifying it with the normalized R\'{e}nyi-$\frac{1}{2}$ effective Fourier support of the response state's Fourier-weight distribution.

\begin{corollary}[\textbf{Phase–Value Tradeoff}]\label{col:Corollary1}
Fix a target character $\chi_0$, and define
\begin{equation}
\Phi=p_{\chi_0}.
\end{equation}
Then 
\begin{equation}\label{eq:main-bound}
\Pval\leq \frac1d \left[\sqrt{\Phi}+\sqrt{(d-1)(1-\Phi)}\right]^2,
\end{equation}
where equality holds if and only if
\begin{equation}
\forall \chi\neq \chi_0:\quad p_\chi = \frac{1-\Phi}{d-1},
\end{equation}
where equality is attained by the one-parameter family of response states
\begin{equation}\label{eq:sat-state}
\ket{\eta_\Phi} = \sqrt{\Phi} \ket{\chi_0} + \sqrt{\frac{1-\Phi}{d-1}} \sum_{\chi\neq \chi_0} e^{i \theta_\chi}\ket{\chi},
\end{equation}
with arbitrary phases $\theta_\chi$ (see Fig.~\ref{fig:tradeoff}).
\end{corollary}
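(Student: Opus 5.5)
The statement to prove is Corollary~\ref{col:Corollary1}. I would derive it directly from Theorem~\ref{th:theorem1}, which already gives $\Pval = \frac1d\bigl(\sum_{\chi}\sqrt{p_\chi}\bigr)^2$. The corollary then reduces to a finite-dimensional optimization: maximize $\sum_\chi \sqrt{p_\chi}$ over probability vectors $p$ with the single coordinate $p_{\chi_0}=\Phi$ fixed. Since the map $s\mapsto s^2/d$ is monotone on $s\ge 0$, bounding the sum bounds $\Pval$, and equality cases transfer directly.

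The first step is to split the sum as $\sqrt{\Phi}+\sum_{\chi\neq\chi_0}\sqrt{p_\chi}$. The remaining $d-1$ weights satisfy $\sum_{\chi\neq\chi_0}p_\chi = 1-\Phi$. Applying Cauchy--Schwarz to the vectors $(1,\dots,1)$ and $(\sqrt{p_\chi})_{\chi\neq\chi_0}$ in $\mathbb{R}^{d-1}$ gives
\begin{equation*}
\sum_{\chi\neq\chi_0}\sqrt{p_\chi}\le \sqrt{(d-1)\sum_{\chi\neq\chi_0}p_\chi}=\sqrt{(d-1)(1-\Phi)}.
\end{equation*}
Adding $\sqrt{\Phi}$, squaring (both sides are nonnegative), and dividing by $d$ yields Eq.~\eqref{eq:main-bound}. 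Here $d>1$ guarantees that the complementary index set is nonempty.

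For the equality characterization, equality in Cauchy--Schwarz holds iff $(\sqrt{p_\chi})_{\chi\neq\chi_0}$ is proportional to the all-ones vector. That is, all $p_\chi$ with $\chi\neq\chi_0$ are equal, and the normalization then forces $p_\chi=(1-\Phi)/(d-1)$. Because squaring is strictly monotone on nonnegative reals, equality in Eq.~\eqref{eq:main-bound} is equivalent to equality in the Cauchy--Schwarz step, which gives the "if and only if."

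For attainment, note that Theorem~\ref{th:theorem1} says $\Pval$ depends only on the moduli $p_\chi$ and not on the phases $\theta_\chi$. The state $\ket{\eta_\Phi}$ of Eq.~\eqref{eq:sat-state} is normalized and has exactly the required Fourier weights for any choice of phases, so it saturates the bound. A phase on the $\ket{\chi_0}$ component could also be allowed, but the stated family suffices.

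There is no serious obstacle in this argument. The only points needing care are the degenerate endpoints $\Phi=0$ and $\Phi=1$. At these endpoints the bound reads $(d-1)/d$ and $1/d$ respectively, and Cauchy--Schwarz still applies; when $\Phi=1$ all the other weights vanish and the equality condition holds trivially.
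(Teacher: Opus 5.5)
Your proposal is correct and follows essentially the same route as the paper. Both substitute the formula from Theorem~\ref{th:theorem1}, split off the fixed $\sqrt{\Phi}$ term, apply Cauchy--Schwarz over the $d-1$ non-target characters with its proportionality equality condition, and use phase-independence to show that $\ket{\eta_\Phi}$ attains the bound; your explicit treatment of the endpoints $\Phi=0,1$ and of strict monotonicity of squaring for the ``if and only if'' is a small but welcome tightening.
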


% -----------------------------------
\subsection{Example: Grover-Type Response State}
% -----------------------------------

For $A=\mathbb{Z}_2^2$ ($d=4$) with Grover target character $\chi_0=\chi_{11}$ and predicate $(-1)^{f(x)}=\chi_{11}(F(x))$,
the saturating state at phase fidelity $\Phi$ is 
\begin{equation}
\ket{\eta_\Phi}=
\sqrt{\Phi}\ket{\chi_{11}}+\sqrt{\frac{1-\Phi}{3}}\sum_{u\neq(1,1)}\ket{\chi_u},
\end{equation}
giving $\Pval\leq[\sqrt{\Phi}+\sqrt{3(1-\Phi)}]^2/4$. 

For illustration, when $\Phi=0.95$ the
bound yields $\Pval\leq 0.464$.
For the larger response dimension $d=16$, the same $5\%$ phase infidelity reduces the maximum achievable value-readout
probability to $\Pval\leq 0.212$, compared with the random-guessing baseline $6.25\%$.

\begin{figure}[h]
\includegraphics[width=0.95\linewidth]{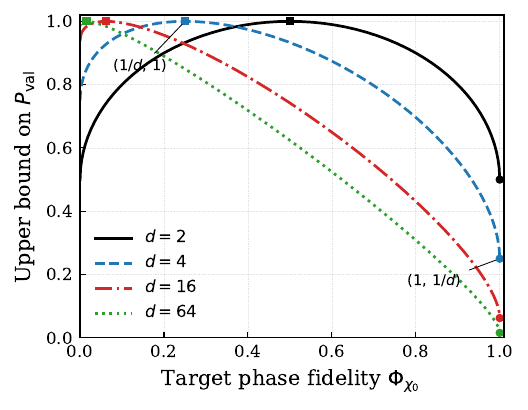}
\caption{Tight phase--value tradeoff, Eq.~\eqref{eq:main-bound}, for
$d\in\{2,4,16,64\}$. The horizontal axis is the target-character phase
fidelity $\Phi_{\chi_0}$; the vertical axis is the optimal single-query
value-readout probability $\Pval$. Each curve is saturated by the
response states in Eq.~\eqref{eq:sat-state}. Exact phase semantics occurs
at $\Phi_{\chi_0}=1$, where $\Pval=1/d$ (bottom right); exact value
readout occurs at $\Phi_{\chi_0}=1/d$, where $\Pval=1$ (marked on each
curve). Near the phase corner, $\Pval-1/d$ grows as $O(\sqrt{\epsilon})$
in the phase infidelity $\epsilon=1-\Phi_{\chi_0}$.}
\label{fig:tradeoff}
\end{figure}

% -----------------------------------
\subsection{Small-perturbation Scaling}
% -----------------------------------

Expanding \eqref{eq:main-bound} at small infidelity $\epsilon=1-\Phi_{\chi_0}$ gives
\begin{equation}
\Pval-\frac1d 
\leq
\frac{2 \sqrt{d-1}}{d}\sqrt{\epsilon} + \frac{d-2}{d}\epsilon +\mathcal{O}(\epsilon^{3/2}).
\end{equation}
The dominant correction near the phase corner is $\Pval-\frac1d=\mathcal{O}(\sqrt{\epsilon})$, with sharp constant $\tfrac{2 \sqrt{d-1}}{d}$. 
This square-root behavior is characteristic of perturbative distinguishability bounds and is saturated by the family \eqref{eq:sat-state}.
For example, for $d = 16$ and $\epsilon = 0.01$ the bound gives $\Pval \leq 0.119$, compared with random guessing $1/d = 0.0625$; for $\epsilon = 0.10$ it rises to $\Pval \leq0.295$.
Larger response-register dimensions sharpen the separation between phase and value modes.

% -----------------------------------
\section{Fourier Geometry as the Origin of Phase–Value Complementarity}\label{sec:sec3}
% -----------------------------------

The proof of Theorem~\ref{th:theorem1} is organized as a sequence of exact correspondences linking the Fourier distribution of the response state to the geometry of its translation orbit, the spectrum of the associated Gram matrix, the optimal covariant discrimination probability, and ultimately the R\'{e}nyi-$\frac12$ effective Fourier support. Each correspondence follows from a standard result in finite-group Fourier analysis or covariant quantum state discrimination.
Combining these ingredients yields a unified Fourier-geometric explanation of phase–value complementarity.

More precisely, the proof proceeds through the following chain of correspondences:\\
Fourier distribution 
$\rightarrow$
Orbit geometry
$\rightarrow$
Gram spectrum
$\rightarrow$
Covariant discrimination
$\rightarrow$
$\Pval$
$\rightarrow$
$H_{1/2}(p)$.

% -----------------------------------
\subsection{From Fourier Distribution to Orbit Geometry}
% -----------------------------------
Let $A$ be a finite Abelian group with $|A|=d$. 
The response-register translation operators are
\begin{equation}
T_a \ket{z}=\ket{z+a}, \quad a,z \in A,
\end{equation}
and let $\widehat{A}$ denote the character group of $A$.
For each character $\chi \in \widehat{A}$, the character states diagonalize the translation representation:
\begin{equation}\label{eq:Ta_chi}
T_a \ket{\chi}
=
\chi(a) \ket{\chi},
\end{equation}
which is a standard consequence of finite-group Fourier analysis \cite{serre_1977, Terras_1999}.

Now, let us expand the response state in the character basis as
\begin{equation}\label{eq:response state}
\ket{\eta}=\sum_{\chi \in \widehat{A}}\sqrt{p_{\chi}}e^{i\theta_{\chi}}\ket{\chi}
\end{equation}
where
\begin{eqnarray}\label{eq:p_chi conditions}
p_\chi \geq 0, \qquad \sum_{\chi \in \widehat{A}} p_\chi =1,
\end{eqnarray}
with the translated  orbit
\begin{equation}\label{eq:orbit}
\mathcal{O}_\eta = \left\{T_a \ket{\eta} \right\}_{a \in A}.
\end{equation}

Using the unitarity of the translation operators, the overlap between two orbit states can be rewritten 
\begin{equation}
\braket{T_a\eta}{T_b\eta}=
\bra{\eta} T_a^{\dagger} T_b \ket{\eta}.
\end{equation}
Since the translation operators form a unitary representation of $A$, $T_a^{\dagger} T_b = T_{b-a}$, we can rewrite the overlap as
\begin{equation}
\braket{T_a\eta}{T_b\eta}
=
\bra{\eta} T_{b-a} \ket{\eta}.
\end{equation}

By substituting the expansion of $\ket{\eta}$, and orthogonality and completeness relations of characters on finite Abelian groups \cite{serre_1977, Terras_1999}, we have
\begin{eqnarray}
\braket{T_a\eta}{T_b\eta}
&=&\bra{\eta} T_{b-a} \ket{\eta}\nonumber\\
&=&
\sum_{\chi, \psi \in \widehat{A}} \sqrt{p_\chi p_\psi} e^{-i\theta_\chi} e^{i\theta_\psi} \bra{\chi} T_{b-a} \ket{\psi}\nonumber\\
&=&
\sum_{\chi, \psi \in \widehat{A}} \sqrt{p_\chi p_\psi} e^{-i\theta_\chi} e^{i\theta_\psi} \bra{\chi} \psi(b-a) \ket{\psi}\nonumber\\
&=&
\sum_{\chi, \psi \in \widehat{A}} \sqrt{p_\chi p_\psi} e^{-i\theta_\chi} e^{i\theta_\psi} \psi(b-a) \braket{\chi}{\psi}\nonumber\\
&=&
\sum_{\chi, \psi \in \widehat{A}} \sqrt{p_\chi p_\psi} e^{-i\theta_\chi} e^{i\theta_\psi} \psi(b-a) \delta_{\chi, \psi}\nonumber\\
&=&
\sum_{\chi \in \widehat{A}} p_\chi \chi(b-a)
\end{eqnarray}
Notice that the phases cancel completely, leaving only the Fourier weights $p_\chi$ remain. This expresses the overlap kernel as the Fourier synthesis of the probability distribution $p$.

Let us define the overlap kernel as the character expansion
\begin{equation}\label{eq:Kg}
K(g):= \sum_{\chi \in \widehat{A}}p_{\chi} \chi(g), \qquad g \in A,
\end{equation}
whose coefficients are precisely the Fourier weights $p_\chi$. 
Then the overlap becomes 
\begin{equation}
\braket{T_a\eta}{T_b\eta}=
K(b-a).
\end{equation}

Since every overlap is determined by $K(g)$, which  is the Fourier synthesis of the coefficient sequence $p = (p_\chi)_{\chi \in \widehat{A}}$, the orbit geometry depends only on the Fourier-weight distribution and is completely independent of the phases $\theta_\chi$.

This is the first arrow in our chain:
\begin{center}
\fbox{
Fourier distribution 
$\rightarrow$ Orbit geometry
}
\end{center}
Equivalently,
\begin{center}
\fbox{
$p_\chi
\rightarrow
K(g)
$
}
\end{center}

% -----------------------------------
\subsection{From Orbit Geometry to Gram Spectrum}
% -----------------------------------
The Gram matrix is defined by
\begin{equation}\label{eq:gram}
G_{ab}:=\braket{T_a\eta}{T_b\eta}=
K(b-a),
\end{equation}
and is therefore translation-invariant. In the cyclic case $A=\mathbb{Z}_d$, this is exactly a circulant matrix. For a general finite Abelian group, it is the group-theoretic analogue of a circulant matrix.

Since the Gram matrix is translation-invariant, its eigenmodes are precisely the Fourier characters of the group (see Appendix~\ref{app:app1} for the proof).

Consequently the character vectors $v_\chi(a)=\overline{\chi(a)}$ form an eigenbasis of $G_{ab}$, with the corresponding eigenvalues,
\begin{equation}
\lambda_{\chi}= d\ p_{\chi}.
\end{equation}
Thus the normalized Gram spectrum coincides exactly with the Fourier-weight distribution,
\begin{equation}
p_{\chi} = \frac{\lambda_{\chi}}{d}.
\end{equation}
Therefore the Fourier-weight distribution, the overlap kernel, and the normalized Gram spectrum contain exactly the same information.

This is the second arrow in our chain:
\begin{center}
\fbox{
Orbit geometry 
$\rightarrow$
Gram spectrum
}
\end{center}
Equivalently,
\begin{center}
\fbox{
$K(g)
\rightarrow
\lambda_\chi
$
}
\end{center}

% -----------------------------------
\subsection{From Gram Spectrum to Optimal Covariant Discrimination}
% -----------------------------------

Consider the geometrically uniform ensemble
\begin{equation}\label{eq:ensemble}
\mathcal{E}_\eta = \left\{\frac1d, T_a \ket{\eta}\right\}_{a\in A}.
\end{equation}
where $A$ is a finite Abelian group with $|A|=d$, and the response state is given in \eqref{eq:response state}. The operational task is to identify the unknown shift $a$ from a single copy of the state $T_a \ket{\eta}$.

The optimality of the square-root measurement for Abelian geometrically uniform ensembles is well established 
\cite{ban1997optimum, Eldar:2000dav, Krovi:2015bah, Zhou:2025iac}. Accordingly, our objective here is not to re-establish this optimality, but to derive the resulting discrimination probability in a form adapted to response-register oracles and expose its Fourier-entropic interpretation. For completeness, we present a direct derivation adapted to the present setting.

Following \cite{holevo2011probabilistic}, a measurement used to identify the unknown shift is described by a positive operator-valued measure (POVM) $\{M_a\}_{a \in A}$, satisfying 
\begin{equation}\label{eq:POVM completeness}
M_a\geq 0, \qquad \sum_{a \in A} M_a=I.
\end{equation}
Because the ensemble is generated by the translation action of $A$, we may restrict to covariant positive operator-valued measures (POVM)s of the form
\begin{equation}
M_a = T_a M_0 T_a^{\dagger}, \qquad a \in A,
\end{equation}
where $M_0$ is a positive semidefinite seed operator.

The operator $M_0$ can be expanded in the character basis, 
\begin{equation}
M_0 = \sum_{\chi, \psi \in \widehat{A}} m_{\chi \psi} \ket{\chi}\bra{\psi}.
\end{equation}
By using  \eqref{eq:Ta_chi}, we can  show that the POVM completeness condition \eqref{eq:POVM completeness} can be expanded as
\begin{eqnarray}
I
&=& \sum_{a \in A} T_a M_0 T_a^{\dagger}\nonumber\\
&=&
\sum_{\chi, \psi \in \widehat{A}} m_{\chi \psi}
\left(\sum_{a \in A} \chi(a) \overline{\psi(a)} \right) \ket{\chi}\bra{\psi}\nonumber\\
&=&
d \sum_{\chi \in \widehat{A}} m_{\chi \chi}
\ket{\chi}\bra{\chi}.
\end{eqnarray}
The completeness condition requires that 
\begin{equation}
m_{\chi \chi} = \frac1d, \qquad \chi \in \widehat{A}.
\end{equation}
Thus, every feasible covariant seed $M_0$ is a positive semidefinite matrix whose diagonal entries are fixed to $\frac1d$.

Given POVM $\{M_a\}_{a \in A}$, the average success probability of correctly identifying the signal state is
\begin{eqnarray}
P_{\mathrm{succ}}&=& \frac1d \sum_{a \in A} \bra{T_a \eta} M_a \ket{T_a \eta},
\end{eqnarray}
which is the standard figure of merit in minimum-error quantum state discrimination \cite{holevo2001statistical, Barnett:2008uix}.
This can be expanded as
\begin{eqnarray}
P_{\mathrm{succ}}
&=& \frac1d \sum_{a \in A} \bra{T_a \eta} T_a M_0 T_a^{\dagger} \ket{T_a \eta} \nonumber\\
&=& \frac1d \sum_{a \in A} \bra{ \eta} M_0 \ket{\eta} \nonumber\\
&=& \bra{ \eta} M_0 \ket{\eta}\nonumber\\
&=& \sum_{\chi, \psi \in \widehat{A}} \sqrt{p_\chi p_\psi} e^{-i \theta_\chi} e^{i \theta_\psi} \bra{ \chi} M_0 \ket{\psi}\nonumber\\
&=& \sum_{\chi, \psi \in \widehat{A}} \sqrt{p_\chi p_\psi} e^{-i \theta_\chi} e^{i \theta_\psi} m_{\chi \psi}.
\end{eqnarray}

Using the positive-semidefinite Cauchy–Schwarz inequality and the fixed diagonal condition, we have
\begin{equation}
|m_{\chi \psi}|^2 \leq m_{\chi \chi} m_{\psi\psi}=\frac{1}{d^2}.
\end{equation}
On the other hand, since $M_0\geq0$, its expectation value is real and nonnegative which leads to
\begin{eqnarray}
P_{\mathrm{succ}} &=& |P_{\rm succ}|=\left| \bra{ \eta} M_0 \ket{\eta}\nonumber \right|\nonumber\\
&=&
\left|
\sum_{\chi,\psi\in\widehat A}
\sqrt{p_\chi p_\psi},
e^{-i\theta_\chi}e^{i\theta_\psi}
m_{\chi\psi}
\right| \nonumber \\
&\leq &
\sum_{\chi,\psi\in\widehat A}
\sqrt{p_\chi p_\psi}
|m_{\chi\psi}| \nonumber \\
&\leq&
\frac{1}{d}
\sum_{\chi,\psi\in\widehat A}
\sqrt{p_\chi p_\psi} \nonumber \\
&=&
\frac{1}{d}
\left(
\sum_{\chi\in\widehat A}
\sqrt{p_\chi}
\right)^2.
\end{eqnarray}

To show that the upper bound is attainable let us define 
\begin{equation}
\ket{\mu}:= \sum_{\chi \in \widehat{A}}e^{i \theta_\chi}\ket{\chi},
\end{equation}
where the phases are chosen to match those of the response state in Eq.~\eqref{eq:response state},
and 
\begin{eqnarray}
\widetilde{M_0}&:=&\frac1d \ket{\mu}\bra{\mu} \nonumber\\
&=& \frac1d \sum_{\chi, \psi \in \widehat{A}} e^{i \theta_\chi}e^{-i \theta_\psi}\ket{\chi} \bra{\psi}.
\end{eqnarray}
The matrix elements are 
\begin{equation}
\widetilde{m_{\chi \psi}}=\frac1d  e^{i \theta_\chi}e^{-i \theta_\psi},
\end{equation}
and in particular 
\begin{equation}
\widetilde{m_{\chi \chi}}=\frac1d.
\end{equation}
Thus, $\widetilde{M_0}$ satisfies the required diagonal constraint. Now we verify the completeness by using the previous calculation,
\begin{eqnarray}
\sum_{a\in A}\widetilde{M_a}
&=&
\sum_{a\in A}T_a \widetilde{M_0} T_a^{\dagger}\nonumber\\
&=&
\frac1d \sum_{\chi, \psi \in \widehat{A}} e^{i \theta_\chi}e^{-i \theta_\psi} \left(\sum_{a\in A} \chi(a) \overline{\psi(a)} \right) \ket{\chi} \bra{\psi}\nonumber\\
&=&
 \sum_{\chi\in \widehat{A}}  \ket{\chi} \bra{\chi}\nonumber\\
 &=& I.
\end{eqnarray}
Therefore the operators 
\begin{equation}
M_a = T_a \widetilde{M_0} T_a^{\dagger},
\end{equation}
form a valid POVM.
Additionally,  for the above POVM we have,
\begin{eqnarray}
\widetilde{P_{\mathrm{succ}}} &=& \bra{\eta}\widetilde{M_0}\ket{\eta}\nonumber\\
&=& \frac1d |\braket{\mu}{\eta}|^2\nonumber\\
&=& \frac1d \left(\sum_{\chi \in \widehat{A}} \sqrt{p_\chi} \right)^2,
\end{eqnarray}
which attains the upper bound. Hence the bound is tight and the POVM is optimal.

Therefore, we can conclude that for the Abelian orbit ensemble given in \eqref{eq:ensemble}, the optimal discrimination probability is\footnote{In the oracle setting, this optimal discrimination probability is precisely the value-readout probability, so we write it as $\Pval$.}
\begin{equation}\label{eq:Pval}
\Pval = \frac1d \left(\sum_{\chi \in \widehat{A}} \sqrt{p_\chi} \right)^2.
\end{equation}
This quantity depends only on the Fourier-weight distribution $p_\chi$, not on the phases $\theta_\chi$.
Therefore the optimal discrimination probability is completely determined by the Gram spectrum (equivalently by the Fourier distribution).

This is the third arrow in our chain:
\begin{center}
\fbox{
Gram spectrum
$\rightarrow$
Covariant discrimination
}
\end{center}
Equivalently,
\begin{center}
\fbox{
$\lambda_\chi
\rightarrow
\Pval
$
}
\end{center}

% -----------------------------------
\subsection{From Covariant Discrimination to R\'{e}nyi-$\frac{1}{2}$ Entropy}\label{sec:From CD to RE}
% -----------------------------------

We now identify the quantity appearing in Eq.~\eqref{eq:Pval} with a standard entropy measure. Let $p = (p_\chi)_{\chi \in \widehat{A}}$ denote the Fourier-weight distribution introduced in Eq.~\eqref{eq:Fourier-weight distribution}.

The R\'{e}nyi entropy order $m>0$, $m\neq1$ is defined by \cite{renyi1961measures, Tomamichel:2015gtd}
\begin{equation}
H_{m}(p)=\frac{1}{1-m}\log_2\left(\sum_{\chi \in \widehat{A}}p_\chi^{m} \right), 
\end{equation}
which leads to 
\begin{equation}\label{eq:reni1/2}
H_{1/2}(p)=2\log_2\left(\sum_{\chi \in \widehat{A}}\sqrt{p_\chi} \right).
\end{equation}
Exponentiating Eq.~\eqref{eq:reni1/2} gives
\begin{equation}
2^{H_{1/2}(p)}=\left(\sum_{\chi \in \widehat{A}}\sqrt{p_\chi} \right)^2,
\end{equation}
which leads to
\begin{equation}\label{eq:Pval2}
\Pval = \frac{2^{H_{1/2}(p)}}{d},
\end{equation}
where $p = (p_\chi)_{\chi \in \widehat{A}}$ is the Fourier-weight distribution of the response state (see Fig.~\ref{fig:Oracles}). The geometric meaning of this effective support is demonstrated in Fig.~\ref{fig:Fourier}, for the three-character case $d=3$.

\begin{figure}
\centering
\includegraphics[width=1\linewidth]{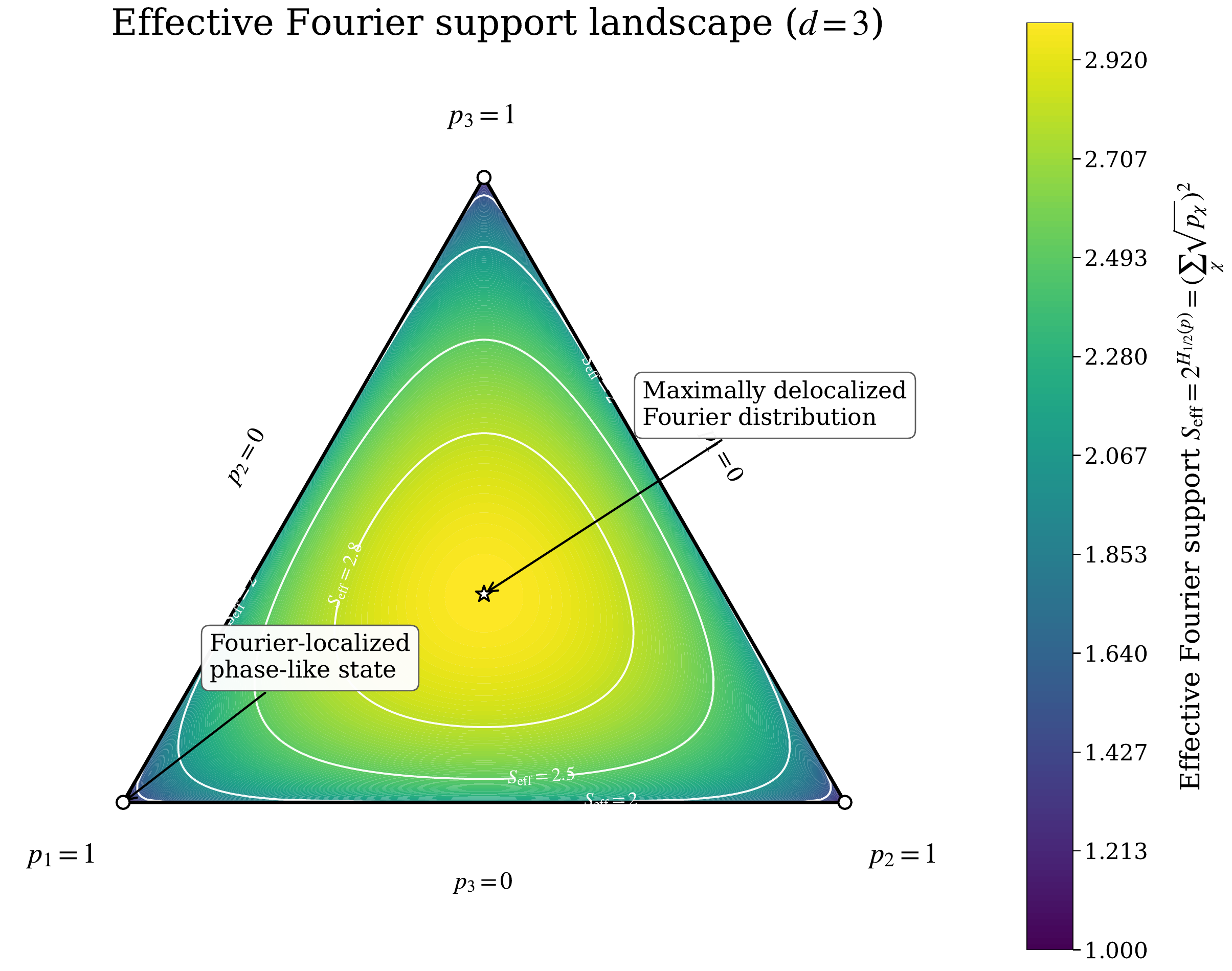}
\caption{Effective Fourier-support landscape for a three-dimensional Fourier-weight distribution ($d=3$). Each point in the triangular simplex represents a Fourier-weight distribution
$p=(p_1,p_2,p_3)$ satisfying $p_1+p_2+p_3=1$. The color scale shows the effective Fourier support,
$
S_{\mathrm{eff}}
=
2^{H_{1/2}(p)}
=
\left(\sum_{\chi}\sqrt{p_\chi}\right)^2,
$
which determines the optimal single-query value-readout probability through, 
$
P_{\mathrm{val}}
=
S_{\mathrm{eff}}/d.
$ 
White curves denote contours of constant $S_{\mathrm{eff}}$. The three vertices correspond to Fourier-localized distributions with $S_{\mathrm{eff}}=1$, representing exact phase semantics, while the center of the simplex,
$p_1=p_2=p_3=1/3$,
attains the maximum $S_{\mathrm{eff}}=3$, corresponding to perfect value readout. The figure illustrates how increasing Fourier delocalization increases the effective Fourier support and consequently, the optimal value-readout capability.
}
\label{fig:Fourier}
\end{figure}

Thus the optimal value-readout probability is exactly the normalized R\'{e}nyi-$\frac{1}{2}$ effective support size of the Fourier-weight distribution. The joint operational consequences of this identity and the
phase--value complementarity relation are illustrated in
Fig.~\ref{fig:placeholder} for $d=16$. The color scale
resolves the effective Fourier support
$S_{\mathrm{eff}}=2^{H_{1/2}(p)}=dP_{\mathrm{val}}$,
while the boundary curves delimit the achievable value-readout
performance at fixed target-character fidelity.

\begin{figure}
\centering
\includegraphics[width=1\linewidth]{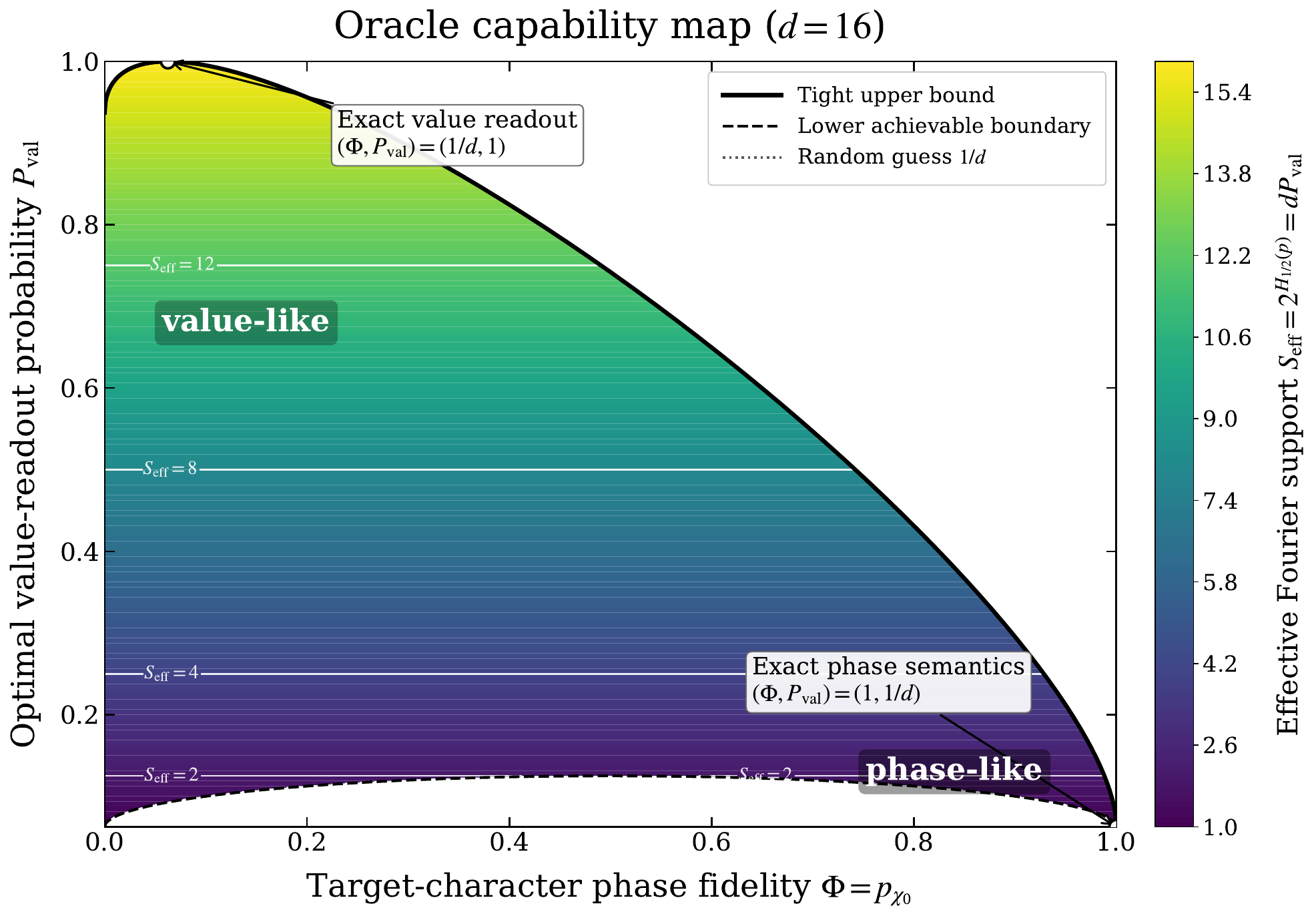}
\caption{Phase–value capability map for a response-register dimension $d=16$. The horizontal axis shows the target-character phase fidelity $\Phi=p_{\chi_0}$, while the vertical axis gives the optimal single-query value-readout probability $P_{\mathrm{val}}$. The color scale represents the effective Fourier support,
$S_{\mathrm{eff}}
=
2^{H_{1/2}(p)}
=
dP_{\mathrm{val}}$,
with white contour lines indicating selected constant-$S_{\mathrm{eff}}$ levels. The solid black curve denotes the tight phase--value complementarity bound,
$
P_{\mathrm{val}}
\le
\frac{1}{d}
\left[
\sqrt{\Phi}
+
\sqrt{(d-1)(1-\Phi)}
\right]^2$,
while the dashed black curve gives the lower achievable boundary obtained by concentrating the remaining Fourier weight on a single non-target character. The dotted horizontal line marks the random-guessing probability $1/d$. The limiting points $(\Phi,P_{\mathrm{val}})=(1/d,1)$ and $(1,1/d)$ correspond to exact value readout and exact phase semantics, respectively. The figure illustrates the complete achievable phase--value region and shows how increasing phase fidelity necessarily reduces the maximum attainable value-readout capability.
}
\label{fig:placeholder}
\end{figure}

Appendix~\ref{app:app2} further discusses the properties of the effective Fourier support.

This is the fourth arrow in our chain:
\begin{center}
\fbox{
Covariant discrimination
$\rightarrow$
R\'{e}nyi-$\frac{1}{2}$ entropy
}
\end{center}
Equivalently,
\begin{center}
\fbox{
$\Pval
\rightarrow
H_{1/2}(p)
$
}
\end{center}

% -----------------------------------
\subsection{Maximization under a Fixed Character Weight}
% -----------------------------------

We now derive the phase–value complementarity relation by maximizing the R\'{e}nyi-$\frac{1}{2}$ effective support under a fixed target-character weight.
Let $p = (p_\chi)_{\chi \in \widehat{A}}$ be a probability distribution on $d$ characters, and fix one target character $\chi_0$, and impose
\begin{equation}
p_{\chi_0}=\Phi.
\end{equation}
Consequently, the remaining probability mass is
\begin{equation}
\sum_{\chi \neq \chi_0}p_\chi = 1 - \Phi.
\end{equation}
We seek the largest possible effective Fourier support, which is equivalent to maximizing
\begin{equation}
2^{H_{1/2}(p)}= \left(\sum_{\chi \in \widehat{A}}\sqrt{p_\chi} \right)^2.
\end{equation}
Since the square function is increasing on nonnegative numbers, this is equivalent to maximizing
\begin{equation}
\sum_{\chi \in \widehat{A}}\sqrt{p_\chi} = \sqrt{\Phi}+\sum_{\chi \neq \chi_0}\sqrt{p_\chi}.
\end{equation}
The target-character contribution is fixed by $\Phi$; only the remaining Fourier weight can be redistributed.

Now we can use the Cauchy–Schwarz inequality on $d-1$ non-target characters,
\begin{eqnarray}
\sum_{\chi \neq \chi_0}\sqrt{p_\chi}
&\leq&
\sqrt{\left(\sum_{\chi \neq \chi_0}1^2 \right) \left(\sum_{\chi \neq \chi_0}p_\chi \right)}\nonumber\\
&=& \sqrt{(d-1)(1-\Phi)}.
\end{eqnarray}
The equality in Cauchy–Schwarz holds if and only if the vectors
\begin{equation}
(\sqrt{p_\chi})_{\chi \neq \chi_0},\quad \text{and}\quad (1, \ldots, 1),
\end{equation}
are proportional, which means that all non-target weights are equal, namely
\begin{equation}
\forall \chi\neq \chi_0: \, p_\chi = \frac{1-\Phi}{d-1}.
\end{equation}

Finally, by combining the target term and the non-target bound,
\begin{equation}
\sum_{\chi \in \widehat{A}}\sqrt{p_\chi}
\leq
\sqrt{\Phi}+\sqrt{(d-1)(1-\Phi)},
\end{equation}
which leads to 
\begin{equation}\label{eq:extremality}
2^{H_{1/2}(p)}\leq \left[\sqrt{\Phi}+\sqrt{(d-1)(1-\Phi)}\right]^2.
\end{equation}
The same extremality can also be understood from the Schur-concavity: at fixed target weight, entropy is maximized by spreading the remaining mass uniformly.

Equality holds exactly when the non-target character weights are uniform. This is precisely the entropy form of the phase–value tradeoff. Combining Eq.~\eqref{eq:extremality} with Eq.\eqref{eq:Pval2} immediately yields Corollary~\ref{col:Corollary1}.

% -----------------------------------
\section{Discussion}\label{sec:sec5}
\label{sec:Discussion}
% -----------------------------------

Theorem~\ref{th:theorem1} establishes a precise operational tradeoff between the phase and value semantics of a response-register oracle. Although both arise from the same oracle unitary, they rely on fundamentally different uses of the response state. Phase semantics are optimized by concentrating the response state onto a distinguished character, whereas value semantics are optimized by maximizing the distinguishability of the translated response-state orbit. The theorem shows that these objectives are fundamentally incompatible within a single query, yielding a tight and complete characterization of the achievable phase--value tradeoff.

The key observation underlying the theorem is that value readout admits a natural formulation as a quantum state-discrimination problem. For a fixed response state $\ket{\eta}$, the possible response values generate the translated orbit $\{T_a\ket{\eta}\}_{a\in A}$, and recovering the response value from a single oracle query is therefore equivalent to identifying which orbit state was prepared. This observation places value readout within the well-established framework of covariant state discrimination, allowing its optimal performance to be determined by the geometry of the orbit ensemble.

The theorem further shows that the optimal value-readout probability admits an exact information-theoretic characterization. Rather than appearing only as the solution of an optimal discrimination problem, it is shown to be exactly the normalized R\'enyi-$\frac12$ effective Fourier support of the response state. Consequently, the R\'enyi-$\frac12$ effective Fourier support acquires a direct operational interpretation: it quantifies the optimal value-readout capability of a response state to reveal oracle values. This interpretation bridges the information-theoretic description of Fourier structure and the operational behavior of quantum oracles.

The complementarity relation established in Theorem~\ref{th:theorem1} is not merely an upper bound but a complete characterization of the achievable phase--value tradeoff. Every step of the derivation admits an exact equality condition. In particular, optimal value readout is achieved by the square-root measurement for the translated orbit, while the entropy optimization is saturated by distributing the remaining Fourier weight uniformly over all non-target characters. This yields an explicit family of response states that continuously interpolates between exact phase semantics and exact value readout, thereby providing a complete characterization of the optimal phase--value tradeoff.

A natural direction for future work is the extension of the present framework beyond finite Abelian response groups. In the Abelian setting, the character basis diagonalizes the orbit Gram matrix, leading to a scalar Fourier-weight distribution and the entropy characterization established here. For non-Abelian response groups, the translation representation decomposes into higher-dimensional irreducible representations, and the corresponding Fourier coefficients become matrix-valued. Determining the appropriate non-Abelian analogue of effective Fourier support, and whether an analogous information-theoretic characterization of optimal value-readout capability exists, remains an important open problem.

Beyond these mathematical extensions, the present work suggests a broader perspective on quantum oracle computation. The results illustrate how different operational tasks exploit the Fourier structure of a response state: phase--oriented tasks favor concentration of Fourier weight, whereas value-oriented tasks favor its distribution across many Fourier modes. More generally, this viewpoint may provide a useful framework for understanding how the structure of quantum states governs the information that can be accessed by different computational primitives in oracle-based quantum algorithms.\\

\textbf{Acknowledgments.}
The authors thank Saleh Rahimi-Keshari for his careful reading of an earlier version of this manuscript and for his insightful comments and suggestions. His guidance to relevant literature and constructive feedback greatly improved both the presentation and the positioning of this work within the existing body of research.

\bibliography{apssamp}

% -----------------------------------
\appendix
% -----------------------------------

% -----------------------------------
\section{Diagonalization of Gram matrix}
\label{app:app1}
% -----------------------------------
For every character $\psi \in \widehat{A}$, define a vector
\begin{equation}
v_\psi:= (\overline{\psi(a)})_{a\in A}.
\end{equation}
In coordinates,
\begin{equation}
(v_\psi)_a = \overline{\psi(a)}.
\end{equation}
We show that $v_\psi$ is an eigenvector of the Gram matrix $G$.
For the $a$-th component we can write
\begin{eqnarray}
\left(G v_\psi \right)_a 
&=& \sum_{b\in A} G_{ab}\overline{\psi(b)}\nonumber\\
&=& \sum_{b\in A} K(b-a)\overline{\psi(b)}\nonumber\\
&=& \sum_{g\in A} K(g)\overline{\psi(a+g)}\nonumber\\
&=& \sum_{g\in A} K(g)\overline{\psi(a)}\overline{\psi(g)} \nonumber\\
&=& \overline{\psi(a)} \sum_{g\in A} K(g)\overline{\psi(g)},
\end{eqnarray}
where in the $2^{\rm{nd}}$ line we use Eq.~\eqref{eq:gram}, in the $3^{\rm{rd}}$ line we change variable $g=b-a$, in the $4^{\rm{th}}$ line we use the fact $\psi$ is a character, and in the last line the factor $\psi(a)$ comes out. Hence
\begin{equation}
G v_\psi = \lambda_\psi v_\psi,
\end{equation}
with eigenvalue
\begin{equation}
\lambda_\psi=\sum_{g\in A} K(g)\overline{\psi(g)}.
\end{equation}
We showed that every character is an eigenvector and consequently, the character vectors diagonalize the Gram matrix. This is a standard theorem from finite-group Fourier analysis.

Now, we can compute the eigenvalue by substituting the expression for $K(g)$,
\begin{eqnarray}
\lambda_\psi
&=&\sum_{g\in A} \left( \sum_{\chi \in \widehat{A}} p_\chi \chi(g) \right) \overline{\psi(g)}\nonumber\\
&=& \sum_{\chi \in \widehat{A}} p_\chi \sum_{g\in A}  \chi(g) \overline{\psi(g)}\nonumber\\
&=& \sum_{\chi \in \widehat{A}} p_\chi (d\ \delta_{\chi, \psi})\nonumber\\
&=& d\ p_\psi,
\end{eqnarray}
where in the $2^{\rm{nd}}$ line we use the standard orthogonality relation \cite{serre_1977, Terras_1999, diaconis1988group},
\begin{equation}
\sum_{g\in A} \chi(g) \overline{\psi(g)} = d\ \delta_{\chi, \psi}.
\end{equation}

% -----------------------------------
\section{Effective Fourier support interpretation}\label{app:app2}
% -----------------------------------

In  \ref{sec:From CD to RE} we show that
\begin{equation}
\Pval=
\frac{2^{H_{1/2}(p)}}{d},
\end{equation}
where $p=(p_\chi)_{\chi\in\widehat A}$ is the Fourier-weight distribution of the response state. This motivates the interpretation of
\begin{equation}
S_{\mathrm{eff}}(p)
:=
2^{H_{1/2}(p)}
=
\left(
\sum_{\chi\in\widehat A}
\sqrt{p_\chi}
\right)^2,
\end{equation}
as an effective Fourier support size. To justify this terminology, we establish several properties analogous to those satisfied by an effective number of occupied modes.

\subsection*{Proposition B.1 (Effective Fourier Support)}
Let $p=(p_\chi)_{\chi\in\widehat A}$ be a probability distribution on a character set of size  $d$. Then the quantity
\begin{equation}
S_{\mathrm{eff}}(p)
=
\left(
\sum_{\chi\in\widehat A}
\sqrt{p_\chi}
\right)^2    
\end{equation}
satisfies:
\begin{enumerate}
\item $1 \leq S_{\mathrm{eff}}(p) \leq d$;
\item $S_{\mathrm{eff}}(p)=1$ if and only if $p$ is a point mass;
\item $S_{\mathrm{eff}}(p)=d$ if and only if $p$ is the uniform distribution;
\item $S_{\mathrm{eff}}(p)$ is monotone under mixing: if $p\succ q$ in the majorization order, then
\begin{equation}
S_{\mathrm{eff}}(p)
\le
S_{\mathrm{eff}}(q).
\end{equation}
\end{enumerate}

\textbf{Proof.}
\paragraph*{Bounds.} 
Since all probabilities are nonnegative,
\begin{eqnarray}
\sum_{\chi\in\widehat A}\sqrt{p_\chi}
\geq
\sqrt{\sum_{\chi\in\widehat A}p_\chi}
=
1,
\end{eqnarray}
which gives
\begin{equation}
S_{\mathrm{eff}}(p)\ge 1.
\end{equation}
For the upper bound, Cauchy–Schwarz yields
\begin{eqnarray}
\left(
\sum_{\chi\in\widehat A}
\sqrt{p_\chi}
\right)^2
\leq
\left(
\sum_{\chi\in\widehat A}1^2
\right)
\left(
\sum_{\chi\in\widehat A}p_\chi
\right)
=
d.
\end{eqnarray}
Therefore,
\begin{equation}
1 \leq S_{\mathrm{eff}}(p) \leq d,
\end{equation}
and equivalently,
\begin{equation}
0 \leq H_{1/2}(p) \leq \log_2 d.
\vspace{15pt}
\end{equation}

\paragraph*{Extremal distributions.}
The lower bound is attained if and only if
\begin{equation}
\sum_{\chi\in\widehat A}\sqrt{p_\chi}=1.
\end{equation}
Since $\sqrt{x}\ge x$ for $0\leq x\leq 1$, equality can occur only when every probability is either $0$ or $1$. Because the probabilities sum to one, exactly one entry equals $1$ and all others vanish. Thus
\begin{eqnarray}
S_{\mathrm{eff}}(p)=1,
\end{eqnarray}
if and only if $p$ is a point mass.

For the upper bound, equality in Cauchy-Schwarz holds if and only if the vectors
\begin{equation}
(\sqrt{p_\chi})_{\chi\in\widehat A}
\qquad\text{and}\qquad
(1,\ldots,1),
\end{equation}
are proportional. Consequently all probabilities must be equal:
\begin{equation}
p_\chi=\frac1d
\qquad
(\forall \chi\in\widehat A).
\end{equation}
Therefore,
\begin{equation}
S_{\mathrm{eff}}(p)=d,
\end{equation}
if and only if $p$ is the uniform distribution.

\paragraph*{Monotonicity under mixing.}
To formalize the notion of a “more spread-out” distribution, let $p$ and $q$ be probability distributions sorted in decreasing order. We say that $p$ majorizes $q$, written $p\succ q$, if
\begin{equation}
\sum_{i=1}^{k} p_i
\geq
\sum_{i=1}^{k} q_i
\qquad
(k=1,\ldots,d),
\end{equation}
with equality for $k=d$.

Consider the function
\begin{equation}
F(p)=\sum_i \sqrt{p_i}.
\end{equation}
Since the square-root function is strictly concave on $[0,1]$, the Hardy–Littlewood–P\'{o}lya theorem implies that $F(p)$ is strictly Schur-concave. Hence,
\begin{eqnarray}
p\succ q
\quad\Longrightarrow\quad
F(p)\le F(q).
\end{eqnarray}
Squaring both sides gives
\begin{equation}
p\succ q
\quad\Longrightarrow\quad
S_{\mathrm{eff}}(p)
\leq
S_{\mathrm{eff}}(q).
\end{equation}
Thus, the effective Fourier support increases whenever the distribution becomes more mixed in the majorization sense.

These properties justify interpreting $S_{\mathrm{eff}}(p)$ as an effective Fourier support: it ranges from a single occupied Fourier mode for a pure character state to all $d$ modes for the uniform Fourier distribution, and increases monotonically as Fourier weight becomes more evenly distributed.

\end{document}